\DeclareMathOperator{\indeg}{indeg}
\DeclareMathOperator{\outdeg}{outdeg}
\newtheorem{theorem}{Theorem}[section]
\newtheorem{claim}[theorem]{Claim}
\theoremstyle{definition}
\renewcommand{\geq}{\geqslant}
\newlength{\RoundedBoxWidth}
\newsavebox{\GrayRoundedBox}
\newenvironment{GrayBox}[1]%
   {\setlength{\RoundedBoxWidth}{.93\columnwidth}
    \def\boxheading{#1}
    \begin{lrbox}{\GrayRoundedBox}
       \begin{minipage}{\RoundedBoxWidth}}%
   {   \end{minipage}
    \end{lrbox}
    \begin{center}
    \begin{tikzpicture}%
       \node(Text)[draw=black!20,fill=white,rounded corners,inner sep=2ex,text width=\RoundedBoxWidth]
             {\usebox{\GrayRoundedBox}};
        \coordinate(x) at (current bounding box.north west);
        \node [draw=white,rectangle,inner sep=3pt,anchor=north west,fill=white]
        at ($(x)+(6pt,.75em)$) {\boxheading};
    \end{tikzpicture}
    \end{center}}
\newenvironment{defproblemx}[1]{\noindent\ignorespaces%
                                \FrameSep=6pt%
                                \parindent=0pt%
                \begin{GrayBox}{#1}%
                \begin{tabular*}{\columnwidth}{!{\extracolsep{\fill}}@{\hspace{.1em}} >{\itshape} p{1.5cm} p{0.86\columnwidth} @{}}%
            }{
                \end{tabular*}%
                \end{GrayBox}%
                \ignorespacesafterend
            }
\title{A note on finding long directed cycles above the minimum degree bound
in 2-connected digraphs}
\author{
\and Jadwiga Czyżewska\thanks{University of Warsaw, Poland (\texttt{j.czyzewska@mimuw.edu.pl}).
Supported by Polish National Science Centre SONATA BIS-12 grant number 2022/46/E/ST6/00143.}
\and Marcin Pilipczuk\thanks{University of Warsaw, Poland (\texttt{m.pilipczuk@mimuw.edu.pl}).
Supported by Polish National Science Centre SONATA BIS-12 grant number 2022/46/E/ST6/00143.}}
\begin{document}
\date{}
\maketitle

\begin{abstract}
For a directed graph $G$, let $\mathrm{mindeg}(G)$ be the minimum 
among in-degrees and out-degrees of all vertices of $G$. 
It is easy to see that $G$ contains a directed cycle of length at least $\mathrm{mindeg}(G)+1$.
In this note, we show that, even if $G$ is $2$-connected, it is NP-hard to check
if $G$ contains a cycle of length at least $\mathrm{mindeg}(G)+3$. 
This is in contrast with recent algorithmic 
results of Fomin, Golovach, Sagunov, and Simonov [SODA 2022] for
analogous questions in undirected graphs. 
\end{abstract}

\section{Introduction}
The \emph{minimum degree} of a graph $G$, denoted $\mathrm{mindeg}(G)$, 
is defined as follows: if $G$ is undirected, it is just the minimum among the degrees
of all vertices of $G$ and if $G$ is directed, it is the minimum  of in-degree and out-degree
among all vertices of $G$. 

Dirac in 1952~\cite{dirac1952} proved that an undirected 2-connected graph
contains a cycle of length at least $2\mathrm{mindeg}(G)$. 
At SODA 2022, Fomin, Golovach, Sagunov, and Simonov~\cite{DBLP:conf/soda/FominGSS22} showed the following
algorithmic version of Dirac's theorem: finding a cycle of length
at least $2\mathrm{mindeg}(G)+k$ is fixed-parameter tractable in $k$,
that is, solvable in time $f(k) \cdot \mathrm{poly}(|V(G)|)$ for some computable
function $f$. 
A natural question, repeated informally in a number of open problem sessions in the last few
years (and also in the conclusions section of~\cite{DBLP:conf/soda/FominGSS22}, albeit with an incorrect
recollection of a theorem of Thomassen~\cite{thomassen1981}), is whether there is any analog of
this fact in directed graphs.

Thomassen~\cite{thomassen1981} showed that the analog of Dirac's theorem is true in the special case
$|V(G)| = 2\mathrm{mindeg}(G)+1$.
It is easy to see that every directed graph $G$ contains a cycle of length at least
$\mathrm{mindeg}(G)+1$. It is then natural to ask about parameterized algorithms
for finding cycles of length $\mathrm{mindeg}(G)+k$. 
In this work, we show that this problem is NP-hard for $k=3$ and $G$ being $2$-connected.
(A directed graph is $2$-connected if, for every two vertices $u$ and $v$, there exist
two vertex-disjoint paths from $u$ to $v$ in $G$.)

Observe that the 2-connectivity assumption is essential due to the following construction
(that also appears in the undirected setting in~\cite{DBLP:journals/siamdm/FominGSS24}). 
Let $G$ be an instance of the \textsc{Directed Hamiltonian Cycle} problem and
denote $n = |V(G)|$. For every $v \in V(G)$, create a directed clique $K_v$ on $n-1$
vertices and identify one vertex with $v$. The new graph has minimum degree $n-2$
and the only hope for a~cycle of length at least $n$ is to have a Hamiltonian cycle
in the input graph. 

Our work is essentially an observation that an analogous degree-inflating gadget exists for edges
of the input \textsc{Hamiltonian Cycle} instance
(instead of vertices as in the aforementioned construction), keeping the final
graph $2$-connected. 
It remains open whether higher connectivity assumption (such as, say, $3$-connectivity)
changes the picture significantly.

\paragraph{Notation.} We use standard graph notation. All graphs in this paper are finite, simple and unweighted. The set of vertices of graph $G$ is denoted as $V(G)$ and the set of edges as $E(G)$. 

An edge between vertices $u$ and $u$ in an undirected graph is denoted as $uv$. An arc from vertex $u$ to $v$ in a directed graph is denoted as $(u,v)$.

The degree of a vertex $v$ is denoted as $\deg(v)$. In directed graphs we denote the indegree and outdegree as $\indeg(v)$ and $\outdeg(v)$, respectively.

By $X_1-X_2-\dots-X_k$ we denote a path $v_1-v_2-\dots-v_k$, where $v_i$ is some vertex belonging to $X_i$ for each $i = 1,2,\dots, k$. If $X_i$ is singleton $\{x_i\}$ for some $i$, we write $-x_i-$ instead of $-\{x_i\}-$.

A cycle $C$ of length $m$ is a sequence of vertices $v_1-v_2- \dots- v_m- v_{m+1}$, where $v_{m+1}=v_1$ but otherwise the vertices $v_i$ are pairwise distinct, and $v_iv_{i+1}\in E(G)$ for every $i=1,2,\dots, m$. 

A directed clique $G$ is a directed graph such that for every pair of vertices $u,v\in V(G)$ arcs $(u,v)$ and $(v,u)$ belong to $E(G)$.

\section{Main proof}

\begin{theorem}
    For every integer $a \geq 3$, the following problem is NP-hard:
    given a directed $2$-connected graph $G$, accept if $G$ contains a cycle
    of length at least $\mathrm{mindeg}(G)+a$. 
\end{theorem}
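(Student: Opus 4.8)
The plan is to reduce from \textsc{Directed Hamiltonian Cycle}. Let $H$ be the input digraph on $n$ vertices; we may assume $H$ has no isolated vertices and is a fairly generic hard instance. As the introduction hints, the idea is to replace \emph{each arc} of $H$ by a gadget that (i) inflates the minimum degree to some common value $d$, (ii) forces any long cycle to ``pay'' a fixed toll whenever it uses the gadget, so that the only way to beat the bound $d + a$ is to traverse a Hamiltonian cycle of $H$, and (iii) does not destroy $2$-connectivity of the whole construction.

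\medskip

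Concretely, for each arc $(u,v) \in E(H)$ I would introduce a directed clique $K_{(u,v)}$ on roughly $d - O(1)$ fresh vertices, wire $u$ into it and it out to $v$ by a constant number of arcs (e.g.\ pick two special vertices $s_{(u,v)}, t_{(u,v)}$ inside the clique, add arcs $(u, s_{(u,v)})$ and $(t_{(u,v)}, v)$, plus a few more to control degrees), and choose the clique size precisely so that after this surgery every vertex --- the old vertices of $H$, which now have in/out-degree equal to their number of incident in/out-gadgets, and the gadget vertices --- has both in-degree and out-degree \emph{exactly} some value $d$. This is a counting/padding exercise: old vertices whose degree in $H$ is small get ``helped'' by being given extra private clique vertices or by tuning the number of connector arcs, while gadget cliques are made large enough to dominate. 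The key quantitative point is that a cycle passing through $K_{(u,v)}$ can collect at most $|V(K_{(u,v)})|$ vertices from it, but entering and leaving costs it the use of $u$ and $v$ and the connectors, so the ``profit per gadget used'' is capped; summing over the at most $n$ gadgets a Hamiltonian cycle can use, the longest cycle has length exactly (number of gadgets used)$\cdot(\text{gadget size}) + (\text{number of old vertices visited}) + (\text{connectors})$, which is maximized --- and exceeds $d + a$ for the chosen small constant $a$, say $a = 3$ --- precisely when the cycle visits all $n$ old vertices, i.e.\ projects to a Hamiltonian cycle of $H$. Conversely, a Hamiltonian cycle of $H$ lifts to a cycle in $G$ that greedily scoops up every vertex of every gadget it traverses, achieving the target length.

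\medskip

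For $2$-connectivity I would argue directly: given any two vertices $x, y$ of $G$, exhibit two vertex-disjoint $x \to y$ dipaths. Inside a single clique gadget this is immediate; between gadgets one routes through the old vertices of $H$, using that $H$ itself can be taken $2$-connected (the \textsc{Directed Hamiltonian Cycle} problem remains NP-hard on $2$-connected digraphs, which one can either cite or enforce by a light preprocessing of $H$), and that each clique gadget offers at least two internally disjoint routes from its entry to its exit. The bookkeeping here is the part I expect to be most error-prone: one must be careful that the connector arcs and any ``padding'' vertices added to fix degrees of low-degree original vertices do not create cut vertices, and that the two disjoint paths claimed for $2$-connectivity can always avoid each other even when $x$ or $y$ sits inside a gadget or is an original vertex of small degree. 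The clean way is to keep the gadget interface uniform --- every gadget has exactly two entry and two exit connectors arranged so that a clique always provides two vertex-disjoint entry-to-exit traversals --- and to prove a short lemma that any such ``clique with a $2$-in/$2$-out interface'' substituted for an arc preserves $2$-connectivity.

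\medskip

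Putting it together: the reduction is clearly polynomial ($|V(G)| = O(n \cdot d) = O(n^2)$ once $d$ is fixed to something like $2n$), the minimum degree of $G$ is the designed value $d$, and by the length analysis $G$ has a cycle of length $\ge d + a$ if and only if $H$ has a Hamiltonian cycle. The main obstacle, as noted, is the simultaneous satisfaction of the three constraints --- exact degree regularity, the tight length accounting that leaves a gap of a fixed additive constant $a \ge 3$, and $2$-connectivity --- with a single gadget design; once the gadget is pinned down, each individual verification is routine. I would present the gadget first, then the degree computation, then the two directions of the correctness equivalence, and finally the $2$-connectivity lemma.
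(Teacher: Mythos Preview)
Your edge-gadget framework matches the paper's, but the length accounting you sketch is backwards and does not yield the equivalence with Hamiltonicity. If each arc-gadget is a clique of size $\approx d$ and a cycle may ``scoop up every vertex of every gadget it traverses,'' then already a $2$-cycle of $H$ lifts to a cycle of length roughly $2d$ in $G$, far exceeding $d+a$; your quantity $(\text{gadgets used})\cdot(\text{gadget size})+\cdots$ crosses the threshold $d+a$ as soon as two gadgets are used, not only at $n$, and no choice of $d$ repairs this. Separately, padding low-degree original vertices with ``extra private clique vertices'' is exactly the vertex gadget the introduction warns against: it creates cut vertices and destroys $2$-connectivity.

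The paper's construction runs in the opposite direction. The cliques are deliberately \emph{small}, on $2n-a+1$ vertices, and wired so that any cycle containing two or more vertices of a single clique is trapped there and can touch at most one original vertex, hence has length at most $(2n-a+1)+1<2n$. Consequently a cycle of length $\ge 2n$ uses at most one vertex from each clique; that vertex necessarily sits between two adjacent original vertices, so the cycle has length exactly $2k$ where $k$ is the number of original vertices visited, and $2k\ge 2n$ forces a Hamiltonian cycle. To give original vertices degree $\ge 2n$ without any private padding, the paper places $2n$ parallel copies of the gadget on each edge (reducing from \emph{undirected} Hamiltonian Cycle and wiring each copy bidirectionally), which simultaneously keeps the whole graph $2$-connected.
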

\begin{proof}
    We reduce from \textsc{Undirected Hamiltonian Cycle}. Just to recall, in the \textsc{Undirected Hamiltonian Cycle} problem we are given on input an undirected graph $G$ and we are to decide whether there exists a cycle of length $V(G)$ in $G$. Let $n=|V(G)|$.
    Our strategy is to create a new directed graph $H$ out of $G$ such that each each vertex in $H$ has in-degree and out-degree at least $2n-a$ and we ask whether $H$ contains a cycle of length at least $2n$.
    
    Without loss of generality, we can assume that the input graph $G$ is $2$-connected,
    as otherwise the input instance is clearly a no-instance.
    Also, we assume $|V(G)| \geq a$.

    Given a graph $G$, we construct a new graph $H$ in the following manner. First, we take $V(H)=V(G)$. Then, for each $uv\in E(G)$ we add $2n$ directed cliques $C_{uv}^i$, $i =1,2,\ldots,2n$, each on $2n-a+1$ vertices. Moreover, for every $i=1,2,\ldots,2n$, we choose two vertices $u',v'$ of $C_{uv}^i$ and add arcs $(u,u'),\ (u',v),\ (v,v'),\ (v',u)$ to the graph $H$. Vertices belonging to $C_{uv}^i$ for some $uv$ and $i$ are called {\it clique vertices}. This finishes the description of the graph $H$.

    Observe that every clique vertex has in- and out-degree $2n-a$ or $2n-a+1$, while every
    vertex of $G$ has in- and out-degree at least $2n$ in $H$. 
    Furthermore, it is easy to check that $H$ is $2$-connected, as $G$ is $2$-connected.

    \begin{claim}
        If there exists a cycle $C$ in $G$ of length $n$, then there exists a cycle $C'$ in $H$ of length at least $2n$.
    \end{claim}
    \begin{proof}
        Let $C=v_1-v_2-\dots- v_n-v_{n+1}$, where $v_{n+1}=v_1$ and $v_iv_{i+1}\in E(G)$ for every $i=1,2,\dots, n$. Then $v_1-C_{v_1v_2}^1-v_2-C_{v_2v_3}^1-v_3-\dots-v_n-C_{v_nv_{n+1}}^1-v_{n+1}$ is a cycle in $H$ of length $2n$.
    \end{proof}

    \begin{claim}
        If there exists a cycle $C'$ in $H$ of length at least $2n$, then there exists a cycle $C$ in $G$ of length at least $n$.
    \end{claim}
    \begin{proof}
        Suppose that there exists $C_{uv}^i$ such that $C$ contains at least two vertices of $C_{uv}^i$. Then $C$ contains at most one vertex of $V(G)$ and therefore $C$ has length at most $(2n-a+1) + 1=2n-a+2 < 2n$ as $a \geq 3$. Thus for every $C_{uv}^i$ the cycle $C$ contains at most one vertex belonging to it and for each such vertex both its neighbors on the cycle belong to $V(G)$.
        By the construction of $H$ it means that these two vertices are connected in $G$. Therefore, by removing clique vertices and recovering edges of $G$ we decreased the length of the cycle by at most half. Thus there exists a cycle in length at least $\frac12 \cdot 2n=n$.
    \end{proof}
\end{proof}

\bibliographystyle{plain}
\bibliography{bibliography}

\begin{thebibliography}{1}

\bibitem{dirac1952}
Gabriel~A. Dirac.
\newblock Some theorems on abstract graphs.
\newblock {\em Proc. London Math. Soc.}, 3(2):69--81, 1952.

\bibitem{DBLP:conf/soda/FominGSS22}
Fedor~V. Fomin, Petr~A. Golovach, Danil Sagunov, and Kirill Simonov.
\newblock Algorithmic extensions of dirac's theorem.
\newblock In Joseph~(Seffi) Naor and Niv Buchbinder, editors, {\em Proceedings
  of the 2022 {ACM-SIAM} Symposium on Discrete Algorithms, {SODA} 2022, Virtual
  Conference / Alexandria, VA, USA, January 9 - 12, 2022}, pages 406--416.
  {SIAM}, 2022.

\bibitem{DBLP:journals/siamdm/FominGSS24}
Fedor~V. Fomin, Petr~A. Golovach, Danil Sagunov, and Kirill Simonov.
\newblock Longest cycle above erdos-gallai bound.
\newblock {\em {SIAM} J. Discret. Math.}, 38(4):2721--2749, 2024.

\bibitem{thomassen1981}
Carsten Thomassen.
\newblock Long cycles in digraphs.
\newblock {\em Proceedings of the London Mathematical Society},
  s3-42(2):231--251, 1981.

\end{thebibliography}

\end{document}